\newtheorem{thm}{Theorem}
\newtheorem{dfn}[thm]{Definition}
\newtheorem{lem}[thm]{Lemma}
\newtheorem{prob}[thm]{Problem}
\newcounter{mycounter}
\renewcommand{\themycounter}{\arabic{mycounter}}
\newcommand{\mycounter}{\refstepcounter{mycounter}\themycounter}
\renewcommand{\@Opargbegintheorem}[4]{%
  #4\trivlist\item[\hskip\labelsep{#3#2\@thmcounterend}]}
\begin{document}
\title{Dynamic Range Mode Enumeration}
%
%\titlerunning{Abbreviated paper title}
% If the paper title is too long for the running head, you can set
% an abbreviated paper title here
%
\author{Tetto Obata}
\authorrunning{T. Obata}
% First names are abbreviated in the running head.
% If there are more than two authors, 'et al.' is used.
%
\institute{Graduate School of Information Science and Technology, The University of Tokyo, Japan
\email{obata-tetto777@g.ecc.u-tokyo.ac.jp}}
\maketitle              % typeset the header of the contribution
\begin{abstract}
The range mode problem is a fundamental problem and there is a lot of work about it.
There is also some work for the dynamic version of it and the enumerating version of it, 
but there is no previous research about the dynamic and enumerating version of it.
We found an efficient algorithm for it.

\keywords{range mode query, dynamic data structure, enumeration}
\end{abstract}
\section{Introduction}
\begin{dfn}[mode]
    $A$ : multiset \\
    $a \in A$ is a mode of $A$\\
    $\Leftrightarrow$
    $\forall b \in A$ (the multiplicty of $a$ in $A$) $\geq$ (the multiplicty of $b$ in $A$)
\end{dfn}
In the following, ``a mode of multiset $\left\{A[l], A[l+1], \ldots, A[r]\right\}$'' is abbreviated to ``a mode of $A[l:r]$'' for a sequence $A$.
\begin{prob}[Range mode problem]
    Given a sequence $A$ over an alphabet set $\Sigma$,
    process a sequence of queries.
    \begin{itemize}
        \item {\rm mode}$\left(l, r\right)$: output one of the modes of A[l:r]
    \end{itemize}
\end{prob}
The range mode problem is a fundamental problem and there is a lot of work about it.
\begin{table}[H]
  \begin{center}
    \begin{tabular}{|c|c|c|c|} \hline
       & space complexity (bits) & query time complexity & conditions \\ \hline
       \cite{sta1} & ${\rm O}\!\left(n^{2-2\epsilon}\log n\right)$ & ${\rm O}\!\left(n^\epsilon\right)$ & $0\leq \epsilon \leq \frac{1}{2}$ \\ \hline
       \cite{durocher} & ${\rm O}\!\left(n^{2-2\epsilon}\right)$ & ${\rm O}\!\left(n^\epsilon\right)$ & $0\leq \epsilon \leq \frac{1}{2}$ \\ \hline
       \cite{sta3} & ${\rm O}\!\left(\frac{n^2\log \log n}{\log n}\right)$ & ${\rm O}\!\left(1\right)$ & \\ \hline
       \cite{sta4} & ${\rm O}\!\left(nm\log n\right)$ & ${\rm O}\!\left(\log m\right)$ & \\ \hline
       \cite{sta5} & ${\rm O}\!\left(\left(n^{1-\epsilon}m+n\right)\log n\right)$ & ${\rm O}\!\left(n^\epsilon + \log \log n\right)$ & $0\leq \epsilon \leq \frac{1}{2}$ \\ \hline
       \cite{Sumigawa} & ${\rm O}\!\left(4^knm\left(\frac{n}{m}\right)^{\frac{1}{2^{2^k}}}\right)$ & ${\rm O}\!\left(2^k\right)$ & $k \in \mathbb{Z}_{\geq 0}$ \\ \hline
       \cite{Sumigawa} & ${\rm O}\!\left(nm\right)$ & ${\rm O}\!\left(\min\left(\log m, \log \log n\right)\right)$ &  \\ \hline
       \cite{Sumigawa} & ${\rm O}\!\left(nm\left(\log \log\frac{n}{m}\right)^2\right)$ & ${\rm O}\!\left(\log \log \frac{n}{m}\right)$ &  \\ \hline
    \end{tabular}
    \caption{The results of previous research about the range mode problem. $n$ is the length of a string and $m$ is the maximum frequency of an item. Space complexity does not include the input string.}
    \label{Static}
  \end{center}
\end{table}
As a natural extension of the range mode problem, we can consider the enumeration version of the problem.
\begin{prob}[Range mode enumeration problem]
    Given a sequence $A$ over an alphabet set $\Sigma$,
    process a sequence of queries.
    \begin{itemize}
        \item {\rm modes}$\left(l, r\right)$: enumerate the modes of A[l:r]
    \end{itemize}
\end{prob}
There is another natural extension of it, the dynamic version of the problem.
\begin{prob}[Dynamic range mode problem]
    Given a sequence $A$ over an alphabet set $\Sigma$, process a sequence of queries of the following three types:
    \begin{itemize}
        \item {\rm insert}$\left(c, i\right)$: insert $c \left(\in \Sigma\right)$ so that it becomes the $i$-th element of $A$
        \item {\rm delete}$\left(i\right)$: delete the $i$-th element of $A$
        \item {\rm mode}$\left(l, r\right)$: output one of the modes of A[l:r]
    \end{itemize}
\end{prob}
There is some work about the range mode enumeration problem and the dynamic range mode problem.
\begin{table}[htbp]
  \begin{center}
    \begin{tabular}{|c|c|c|c|} \hline
       & space complexity (bits) & query time complexity & condition \\ \hline
      \cite{Sumigawa} & ${\rm O}\!\left(n^{2-2\epsilon}\log n\right)$ & ${\rm O}\!\left(n^\epsilon\left|output\right|\right)$ & $0 \leq \epsilon \leq \frac{1}{2}$ \\ \hline
      \cite{Sumigawa} & ${\rm O}\!\left(nm\left(\log \log \frac{n}{m}\right)^2 + n\log n\right)$ & ${\rm O}\!\left(\log \log \frac{n}{m} + \left|output\right|\right)$ & \\ \hline
      \cite{Sumigawa} & ${\rm O}\!\left(nm + n\log n\right)$ & ${\rm O}\!\left(\log m + \left|output\right|\right)$ & \\ \hline
      \cite{Sumigawa} & ${\rm O}\!\left(n^{1+\epsilon}\log n + n^{2-\epsilon}\right)$ & ${\rm O}\!\left(\log m + n^{1-\epsilon} + \left|output\right|\right)$ & $0 \leq \epsilon \leq 1$ \\ \hline
    \end{tabular}
    \caption{The results of previous research about the range mode enumeration problem. $n$ is the length of a string and $m$ is the maximum frequency of an item. Space complexity does not include the input string.}
    \label{Enumeration}
  \end{center}
\end{table}
\begin{table}[H]
  \begin{center}
    \begin{tabular}{|c|c|c|c|} \hline
       & space complexity(words) & query time complexity\\ \hline
      \cite{dynamic1} & ${\rm O}\!\left(n_{\max} \right)$ & ${\rm O}\!\left(n_{\max}^{\frac{2}{3}}\right)$ \\ \hline
      \cite{dynamic2} & ${\rm \tilde{O}}\!\left(n^{1.327997}\right)$ & ${\rm \tilde{O}}\!\left(n^{0.655994}\right)$ \\ \hline
    \end{tabular}
    \caption{The results of previous research about the dynamic range mode problem where $n$ is the length of string and $n_{\max}$ is the limit of the length of the string. Space complexity does not include the input string. The query time comlexity is same in all the query types. The wordsize is ${\rm \Omega}\! \left(\log n\right)$.}
    \label{Dynamic}
  \end{center}
\end{table}
Considering the normal version, enumerating version, and the dynamic version of the problem, 
we can consider another problem, the dynamic enumerating version one.
\begin{prob}[Dynamic range mode enumeration problem]
    Given a sequence $A$ over an alphabet set $\Sigma$, process a sequence of queries of the following three types:
    \begin{itemize}
        \item {\rm insert}$\left(c, i\right)$: insert $c \left(\in \Sigma\right)$ so that it becomes the $i$-th element of $A$
        \item {\rm delete}$\left(i\right)$: delete the $i$-th element of $A$
        \item {\rm modes}$\left(l, r\right)$: enumerate the modes of A[l:r]
    \end{itemize}
\end{prob}
There is no previous research about the dynamic range mode enumeration problem.\\
It is known that the range mode problem is related to the boolean matrix problem and the set intersection problem~\cite{sta5}.
\begin{prob}[Set intersection problem]
    Given multisets $S_1, S_2, \ldots, S_N$ of a universe $U$, process a sequence of following queries.
    \begin{itemize}
        \item {\rm intersect}$\left(i, j\right)$: check whether $S_i$ and $S_j$ intersect or not
    \end{itemize}
\end{prob}
If the range mode problem can be solved efficiently, it can be checked if two sets intersect efficiently.
We can solve the set intersection problem by building a data structure for a sequence of $2N\left|U\right|$ elements as follows
\begin{align*}
    \mbox{(elements of) } S_1, S_1^{\rm c}, S_1^{\rm c}, S_1, 
    S_2, S_2^{\rm c}, S_2^{\rm c}, S_2, 
    \ldots , 
    S_N, S_N^{\rm c}, S_N^{\rm c}, S_N
\end{align*}
and calling mode$\left(2i\left|U\right| - \left|S_i\right|, 2(j-1)\left|U\right| + \left|S_j\right|\right)$ query for a intersect$\left(i, j\right)$ $\left(i < j\right)$ query.\\
Therefore if the dynamic range mode enumeration problem can be solved efficiently, the computation of the intersection of two sets and modifying of the sets can be done efficiently.
\subsection*{Our contribution}
Existing methods for the dynamic range mode problem cannot be applied to the dynamic range mode enumeration problem.
The step 3 of Algorithm 1 of \cite{dynamic1} cannot be used for the dynamic range mode enumeration problem.
Problem 7 of \cite{dynamic2} needs only one index and the algorithm of this paper is based on this problem.
In this paper, we found the first algorithm for the range dynamic enumeration problem, which can deal with insert and delete queries in
${\rm O}\!\left(N^{\frac{2}{3}}\log \sigma^\prime \right)$ time per query and modes query in ${\rm O}\!\left(N^{\frac{2}{3}}\log \sigma^\prime + |output| \right)$
    time per query where $N$ is the length of the sequence and $\sigma^\prime = \left|\left\{c \in \Sigma \middle| c \mbox{ appears in the sequence}\right\}\right|$.
\section{Main Result}
%\subsection{content}
The following theorem is the main result.
\begin{thm}
    \label{mainresult}
    There exists a data structure for the dynamic range mode enumeration problem in the word RAM model with ${\rm \Omega}\! \left(\log N + \log \sigma\right)$ bits wordsize
    in ${\rm O}\!\left(N^{\frac{2}{3}}\log \sigma^\prime \right)$ time per {\rm insert} and {\rm delete} query and ${\rm O}\!\left(N^{\frac{2}{3}}\log \sigma^\prime + |output| \right)$ 
    time per {\rm modes} query where $N$ is the length of the sequence and \\$\sigma^\prime = \left|\left\{c \in \Sigma \middle| c \mbox{ appears in the sequence}\right\}\right|$.
    The space complexity is ${\rm O}\!\left(N  + N^{\frac{2}{3}}\sigma^\prime\right)$ words.
\end{thm}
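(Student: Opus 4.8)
The plan is to adapt the block-decomposition technique behind the $\mathrm{O}\!\left(N^{2/3}\right)$ dynamic range mode structure of \cite{dynamic1}, but to augment it so that \emph{all} modes of a block span can be reported rather than a single representative. I would partition the current sequence into $b=\Theta\!\left(N^{1/3}\right)$ contiguous blocks of size $s=\Theta\!\left(N^{2/3}\right)$, and maintain four structures: (i) a balanced search tree over positions, augmented with subtree sizes, to perform {\rm insert}/{\rm delete} and to translate an index into its block in $\mathrm{O}\!\left(\log N\right)$ time; (ii) a dense prefix table $P[k][c]$ giving the number of occurrences of $c$ in blocks $1,\dots,k$, so that the frequency of any fixed value in any block span is available in $\mathrm{O}(1)$; and, crucially, (iii) for every ordered pair of blocks $(i,j)$ both the maximum frequency $g[i][j]$ over the span $B_i\cdots B_j$ and the \emph{full set} $L[i][j]$ of values attaining it, the sets stored as balanced BSTs keyed by value. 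Since there are $b^2=\Theta\!\left(N^{2/3}\right)$ pairs, each set has at most $\sigma'$ elements, and $P$ uses $\mathrm{O}\!\left(N^{1/3}\sigma'\right)$ words, the total space is $\mathrm{O}\!\left(N+N^{2/3}\sigma'\right)$, as required.

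For a {\rm modes}$(l,r)$ query I would split $[l,r]$ into the maximal full-block span $[i..j]$ plus the two partial blocks at the ends, scan the $\mathrm{O}(s)=\mathrm{O}\!\left(N^{2/3}\right)$ partial-block elements to form the set $C$ of values occurring there together with their partial counts, and then establish the structural lemma that every mode of $A[l:r]$ lies in $L[i][j]\cup C$: a value absent from the partial blocks has the same frequency in $A[l:r]$ as in the span, so it can be a mode only if that frequency equals $g[i][j]$, i.e. only if it lies in $L[i][j]$. With $F=\max\!\left(g[i][j],\max_{c\in C}\mathrm{freq}_{[l,r]}(c)\right)$ (each $\mathrm{freq}_{[l,r]}(c)$ computed from $P$ plus the scanned partial count), I split into two cases. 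If $F>g[i][j]$ the modes are exactly $\{c\in C:\mathrm{freq}_{[l,r]}(c)=F\}$, found in $\mathrm{O}(s)$ time. If $F=g[i][j]$ the modes are those $c\in L[i][j]$ not defeated plus the qualifying members of $C$; membership in $C$ is tested by a hash set so that the span-only modes are emitted in $\mathrm{O}(1)$ each. Every emitted element is a genuine mode, so the per-query cost is $\mathrm{O}\!\left(N^{2/3}\log\sigma'+|output|\right)$; the point of the case split is precisely to avoid paying for span modes that a partial-block value overtakes.

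For {\rm insert}$(c,i)$ and {\rm delete}$(i)$ I would update the position tree, then increment (or decrement) $P[k][c]$ for the $\mathrm{O}\!\left(N^{1/3}\right)$ blocks after the affected block, and finally visit each of the $\mathrm{O}\!\left(b^2\right)=\mathrm{O}\!\left(N^{2/3}\right)$ pairs $(i,j)$ whose span contains that block: using $P$ to read the old count $x$ in $\mathrm{O}(1)$, the pair needs no change if $x+1<g[i][j]$, gains $c$ in $L[i][j]$ if $x+1=g[i][j]$, and is reset to $\left(g[i][j]{+}1,\{c\}\right)$ if $x=g[i][j]$. To keep block sizes $\Theta\!\left(N^{2/3}\right)$ as $N$ varies, I would trigger a global rebuild whenever $N$ changes by a constant factor, recomputing all sets by an incremental left-to-right scan from each start block in $\mathrm{O}\!\left(N^{4/3}+N^{2/3}\sigma'\right)$ time, which amortizes to $\mathrm{O}\!\left(N^{2/3}\right)$ over the $\Theta(N)$ operations of a phase.

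The hard part will be bounding the cost of maintaining the $b^2$ mode sets, because a reset discards an entire set $L[i][j]$ whose size can be $\Theta(\sigma')$, so a single update is not $\mathrm{O}\!\left(N^{2/3}\log\sigma'\right)$ element-wise. I expect to resolve this by amortization: charge each deletion of a value from a set to the earlier insertion that placed it there, so that the total set-maintenance work over any sequence of $U$ updates is $\mathrm{O}\!\left(U\,b^2\log\sigma'\right)$, i.e. amortized $\mathrm{O}\!\left(N^{2/3}\log\sigma'\right)$ per update, while the simultaneous size bound $\sum_{i,j}|L[i][j]|\le b^2\sigma'=\mathrm{O}\!\left(N^{2/3}\sigma'\right)$ controls the space. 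Verifying this amortized accounting (and, if a worst-case guarantee is wanted, de-amortizing the rebuild by the standard technique of running the reconstruction incrementally alongside ordinary operations) is where the bulk of the technical effort will go.
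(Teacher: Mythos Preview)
Your block decomposition, choice of parameters $b=\Theta(N^{1/3})$ and $s=\Theta(N^{2/3})$, and the structural lemma behind the \texttt{modes} query are essentially the same as the paper's. The crucial difference is what you store per block span: you keep only the maximum frequency $g[i][j]$ together with the set $L[i][j]$ of values attaining it, whereas the paper keeps, for every span $(i,j)$, the \emph{entire} ordered multiset $S_{(i,j)}=\{(\text{multiplicity of }c,\,c):c\text{ appears in the span}\}$ as a balanced BST.

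This difference is where your proposal has a genuine gap, namely in the \texttt{delete} update, which you do not spell out. Suppose for some span $L[i][j]=\{c\}$ with $g[i][j]=f{+}1$, while many other values sit at frequency exactly $f$. Deleting one occurrence of $c$ drops $g[i][j]$ to $f$, and you must rebuild $L[i][j]$ to contain \emph{all} values at frequency $f$; with only the prefix table $P$ available, locating them costs $\Theta(\sigma')$ per affected span. Worse, the amortization you propose (``charge each removal from a set to the insertion that placed it there'') collapses under a simple insert/delete cycle: insert one copy of $c$ (reset $L[i][j]$ to $\{c\}$), then delete it (rebuild $L[i][j]$ to $\Theta(\sigma')$ values), and repeat. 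Each pair of operations incurs $\Theta(\sigma')$ work on every one of the $\Theta(N^{2/3})$ spans containing the affected block, so the amortized cost per update is $\Theta(N^{2/3}\sigma')$, not $\mathrm{O}(N^{2/3}\log\sigma')$. The elements being discarded in the reset were placed there by the preceding \emph{delete}, not by earlier user insertions, so there is nothing to charge them to.

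The paper sidesteps this entirely by storing the full frequency multiset $S_{(i,j)}$: an insert or delete is then a single increment or decrement of one key in each of the $\mathrm{O}(N^{2/3})$ affected BSTs, $\mathrm{O}(\log\sigma')$ apiece, worst case, with no resets and no amortization; the modes of a span are simply the top elements of $S_{(i,j)}$, enumerable in $\mathrm{O}(1)$ per output element. The space is still $\mathrm{O}(N^{2/3}\sigma')$ since each $S_{(i,j)}$ has at most $\sigma'$ entries, and your prefix table $P$ becomes unnecessary because a lookup in $S_{(i,j)}$ already returns the span count of any value in $\mathrm{O}(\log\sigma')$ time.
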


Our main idea is to divide the sequence into $L = {\rm \Theta}\! \left(N^\alpha\right)$ subsequences of length which may be zero but not greater than $C = {\rm \Theta}\!\left(N^{1-\alpha}\right)$ for some parameter $a$.
Let $B_i$ be the $i$-th subsequence. We call it a block. For sequences $X, Y$, we define $X + Y$ as the sequence obtained by concatenating $X$ and $Y$ in this order.\\
The data structure consists of the following components.
\begin{itemize}
    \item $T_A$ : A data structure for the sequence $A$. It can process the following queries.
    \begin{itemize}
        \item access $A\left[l:r\right]$ $\left(0 \leq l \leq r < \left|A\right|\right)$
        in ${\rm O}\!\left(T_{\mycounter\label{accessA}, r-l+1}\right)$ time.
        \item insert a character $c\left(\in \Sigma\right)$ into $i$-th position of $A$ $\left(0 \leq i \leq \left|A\right|\right)$
        in ${\rm O}\!\left(T_{\mycounter\label{insertA}}\right)$ time.
        \item delete the $i$-th character of $X$ $\left(0 \leq i < \left|A\right|\right)$
        in ${\rm O}\!\left(T_{\mycounter\label{deleteA}}\right)$ time.
    \end{itemize}
    \item $T_B$ : A data structure for the array $\left(\left|B_0\right|, \left|B_1\right|, \ldots, \left|B_{L-1}\right|\right)$, which is used to compute which block a character in $A$ belongs to. It can process the following queries.
    \begin{itemize}
        \item increase or decrease the $i$-th element $\left(0 \leq i < L\right)$
        in ${\rm O}\!\left(T_{\mycounter\label{modifyB}}\right)$ time.
        \item calculate $\mathrm{argmin}_i \left|B_i\right|$
        in ${\rm O}\!\left(T_{\mycounter\label{minB}}\right)$ time.
        \item calculate $\min \left\{k \middle| \sum_{i = 0}^{k} \left|B_i\right| \geq a \right\}$ $\left(0 < a \leq \sum_i \left|B_i\right| \right)$
        in ${\rm O}\!\left(T_{\mycounter\label{binaryB}}\right)$ time.
        \item insert a value $x$ into $i$-th position of the array
        in ${\rm O}\!\left(T_{\mycounter\label{insertB}}\right)$ time.
        \item delete the $i$-th element of the array
        in ${\rm O}\!\left(T_{\mycounter\label{deleteB}}\right)$ time.
    \end{itemize}
    \item $S_{(l,r)} \left(0 \leq l \leq r < L\right)$ : A data structure for the ordered set\\$\left\{\left(\left(\mbox{the multiplicity of } c \mbox{ in } B_l + \cdots + B_r\right) , c\right) \middle| c \mbox{ appears in } B_l + \cdots + B_r\right\}$.
    It can process the following queries.
    \begin{itemize}
        \item create an empty set
        in ${\rm O}\!\left(T_{\mycounter\label{initS}}\right)$ time.
        \item increment or decrement the multiplicity of character $c\left(\in \Sigma\right)$
        in ${\rm O}\!\left(T_{\mycounter\label{modifyS}}\right)$ time.
        \item compute the multiplicity of a character $c \left(\in \Sigma\right)$
        in ${\rm O}\!\left(T_{\mycounter\label{accessS}}\right)$ time.
        \item access the largest element
        in ${\rm O}\!\left(T_{\mycounter\label{topS}}\right)$ time.
        \item access the next largest to the last accessed element
        in ${\rm O}\!\left(T_{\mycounter\label{nextS}}\right)$ time.
    \end{itemize}
\end{itemize}

We introduce new operations ${\rm moveLeft}\left(i\right)$ and ${\rm moveRight}\left(i\right)$.
The operation ${\rm moveLeft}\left(i\right)$ moves the first element of $i$-th block to the $(i-1)$-st block.
In such an operation, we only need to modify the following components.
\begin{itemize}
    \item $T_B$
    \item $S_{(0,i-1)}, \ldots, S_{(i-1,i-1)}, S_{(i,i)}, \ldots, S_{(i,L-1)}$
\end{itemize}
This can be done in ${\rm O}\!\left(T_{\ref{modifyB}} + LT_{\ref{modifyS}}\right)$ time.
The operation ${\rm moveRight}\left(i \right)$ moves the last element of $i$-th block to the $(i+1)$-st block.
It can be done in the same time in a similar way.

We process the queries by the following method.

\subsubsection*{delete}
Let $j$ be the index of the block that contains the $i$-th element.
It can be computed in ${\rm O}\!\left(T_{\ref{binaryB}} \right)$ time.
$T_A$ and $T_B$ can be modified easily in ${\rm O}\!\left(T_{\ref{accessA}, 1} + T_{\ref{deleteA}} + T_{\ref{modifyB}}\right)$ time.
We need to modify $S_{(l,r)}$ for all $l, r$ such that $0 \leq l \leq j \leq r < L$. It can be done in ${\rm O}\!\left(L^2 T_{\ref{modifyS}} \right)$ time.

\subsubsection*{insert}
Let $j$ be $\min \left\{k \middle| \sum_{l = 0}^{k} |B_l| \geq i \right\}$. We insert $c$ into the $j$-th block, 
and modify the data structure in a similar way to a delete query.\\
The length of $j$-th block may become larger than $C$. In such a case we balance the length of blocks in the following way.
\begin{enumerate}
    \item Find a block $B_k$ such that $|B_k| + 1 \leq C$.
    \item Operate ${\rm moveLeft}$ or ${\rm moveRight}$ several times so that $|B_j|$ decreases by 1, $|B_k|$ increases by 1 and the rest remain.
\end{enumerate}
Step 1. can be done in ${\rm O}\!\left(T_{\ref{minB}}\right)$ time. Step 2. can be done in ${\rm O}\!\left(L\left(T_{\ref{modifyB}} + LT_{\ref{modifyS}}\right)\right)$ time
because we call {\rm moveLeft} or {\rm moveRight} only ${\rm O}\!\left(L\right)$ times.\\
\subsubsection*{modes}
Let $(i, j)$ be the maximal interval of blocks which is in $A[l:r]$.
It can be computed in ${\rm O}\!\left(T_{\ref{binaryB}}\right)$ time.
If $A[l:r]$ does not contain any blocks, $B_i + \cdots + B_j$ stands for an empty sequence and $S_{l, r}$ stands for an empty set below.\\
It holds that $|A[l:r] \setminus \left(B_i + \cdots + B_j\right)| \leq 2 C$.
It can be said that every mode of $A[l:r]$ is a mode of $B_i + \cdots + B_j$ or appears in 
$A[l:r] \setminus \left(B_i + \cdots + B_j\right)$.
If there does not exists such a character $c$ that meets the following conditions
\begin{itemize}
    \item $c$ is a mode of $A[l:r]$
    \item $c$ does not appear in $A[l:r] \setminus \left(B_i + \cdots + B_j\right)$
\end{itemize}
then every mode of $A[l:r]$ appears in $A[l:r] \setminus \left(B_i + \cdots + B_j\right)$.
We scan the elements in $A[l:r] \setminus \left(B_i + \cdots + B_j\right)$
and count the occurrences of each character in $A[l:r]$ using $S_{\left(i,j\right)}$ and a new ordered set in ${\rm O}\!\left(T_{\ref{initS}} + CT_{\ref{modifyS}} + T_{\ref{accessA}, C} + CT_{\ref{accessS}} \right)$ time, 
and compute the number of occurrences of a mode of $A[l:r]$ in ${\rm O}\!\left(T_{\ref{accessA}, C} + CT_{\ref{accessS}} + T_{\ref{topS}}\right)$ time.
We can judge if there exists a character satisfying the conditions above using the value and the ordered set.
If there does not exist such a character, the enumeration is done.
If exists, every mode of $B_i + \cdots + B_j$ is also a mode of $A[l:r]$, 
so we can enumerate the modes of $A[l:r]$ by the privious scan and the enumeration of the modes of $B_i + \cdots + B_r$, which can be done in ${\rm O}\!\left(T_{\ref{topS}} + T_{\ref{nextS}}\left|output\right|\right)$ time.
Algorithm \ref{modes} denotes the algorithm for the modes query.\\
\begin{algorithm}[H]
    \caption{The algorithm for the modes query.}         
    \label{modes}                          
\SetKwInput{KwInput}{Input}               % Set the Input
\SetKwInput{KwOutput}{Output}              % set the Output
\DontPrintSemicolon

    \KwInput{range $\left(l, r\right)$}
    \KwOutput{all modes of $S[l:r]$}
  
    \SetKwFunction{FMain}{Main}
  
    \SetKwProg{Fn}{Function}{:}{\KwRet}
    \Fn{\FMain}{
        $\left(i, j\right) \leftarrow$ maximal interval such that $B_i + \cdots + B_j \subset A[l:r]$\;
        $T \leftarrow$ a new empty ordered set\;
        \For{$c \in A[l:r] \setminus \left(B_i + \cdots B_j\right)$}{
            increment the multiplicity of $c$ in $T$\;
        }
        $app \leftarrow 0$\;
        \For{$\left(app_c, c\right) \in T$}{
            $app = \max\left(app, app_c + \left(\mbox{the number of appearences of }c \mbox{ in } B_i + \cdots B_j\right) \right)$\;
        }
        $ans \leftarrow \varnothing$\;
        \For{$\left(app_c, c\right) \in T$}{
            \If{$app = app_c + \left(\mbox{the number of appearences of }c \mbox{ in } B_i + \cdots B_j\right)$}
            {
                $ans \leftarrow ans \cup \left\{c\right\}$\;
            }
        }
        $\left(app_c, c\right) \leftarrow$ the top element of $S_{\left(i, j\right)}$\;
        \While{$c \neq NULL$ and $app_c = app$}{
            $ans \leftarrow ans \cup \left\{c\right\}$\;
            $\left(app_c, c\right) \leftarrow$ the next largest to $\left(app_c, c\right)$ in $S_{\left(i, j\right)}$\;
        }
        \KwRet $ans$\;
    }
\end{algorithm}
In order to keep $L = {\rm \Theta}\!\left(N^\alpha\right)$ and $C = {\rm \Theta}\!\left(N^{1-\alpha}\right)$, we use the technique for dynamic data structures~\cite{navarro}.
We group the blocks into three types $\rm{p, c, n}$(previous, current, next).
Set the number and size of the blocks as follows.
\begin{itemize}
    \item $L_{\rm p} = \lceil\left(\frac{N}{2}\right)^\alpha\rceil, C_{\rm p} = \lceil\left(\frac{N}{2}\right)^{1-\alpha}\rceil$
    \item $L_{\rm c} = \lceil N^\alpha\rceil, C_{\rm c} = \lceil N^{1-\alpha}\rceil$
    \item $L_{\rm n} = \lceil \left(2N\right)^\alpha\rceil, C_{\rm n} = \lceil \left(2N\right)^{1-\alpha}\rceil$
\end{itemize}
When we initialize the data structure, all elements are stored in c blocks and initialize the data structure for $L = L_{\rm p} + L_{\rm c} + L_{\rm n}$ blocks.
\subsubsection*{insert}
Move elements so that the sum of the elements in n blocks increases by two and that in p blocks decreases by one (unless they are already empty) compared to before the query.
To achieve this, we move elements as follows
\begin{itemize}
    \item insertion into p: p $\rightarrow$ c, p $\rightarrow$ c, c $\rightarrow$ n, c $\rightarrow$ n
    \item insertion into c: p $\rightarrow$ c, c $\rightarrow$ n, c $\rightarrow$ n
    \item insertion into n: p $\rightarrow$ c, c $\rightarrow$ n
\end{itemize}
where x $\rightarrow$ y means moving the last element of x blocks to y blocks.
If all x blocks are empty, it is ignored.
\subsubsection*{delete}
Move elements so that the sum of the elements in n blocks decreases by two (unless they are already empty) and that in p blocks increases by one compared to before the query.
To achieve this, we move elements as follow
\begin{itemize}
    \item insertion into p: c $\leftarrow$ n, c $\leftarrow$ n, p $\leftarrow$ c, p $\leftarrow$ c
    \item insertion into c: c $\leftarrow$ n, c $\leftarrow$ n, p $\leftarrow$ c
    \item insertion into n: c $\leftarrow$ n, p $\leftarrow$ c
\end{itemize}
where x $\leftarrow$ y means moving the first element of y blocks to x blocks.
If all y blocks are empty, it is ignored.
\begin{lem}{\cite{navarro}}
    When the length of the string becomes double, all elements are in {\rm n} blocks.\\
    When the length of the string becomes half, all elements are in {\rm p} blocks.
\end{lem}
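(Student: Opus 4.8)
The plan is to prove the lemma by tracking, as a function of the number of operations performed since the last rebuild, the three quantities $n_{\rm p}, n_{\rm c}, n_{\rm n}$ denoting the number of elements currently stored in the ${\rm p}$, ${\rm c}$ and ${\rm n}$ blocks respectively, so that $n_{\rm p} + n_{\rm c} + n_{\rm n} = |A|$ at all times. The reference length at the last rebuild is $N$, at which point $n_{\rm c} = N$ and $n_{\rm p} = n_{\rm n} = 0$ by the initialization rule. First I would compute the net effect of a single query on the triple $(n_{\rm p}, n_{\rm c}, n_{\rm n})$ by composing the elementary moves ${\rm x} \to {\rm y}$ and ${\rm x} \leftarrow {\rm y}$ listed in the insert and delete routines, checking all three cases (the affected position lying in a ${\rm p}$, ${\rm c}$, or ${\rm n}$ block). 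Ignoring the ``unless empty'' clauses, an insert changes the triple by $(-1, 0, +2)$ and a delete by $(+1, 0, -2)$; in particular every insert pushes two units toward ${\rm n}$ and every delete pulls two units away from it, with the symmetric statement holding for ${\rm p}$.

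The core of the argument is an invariant maintained by induction on the operation count. Writing $x_{+} = \max(x, 0)$, I would prove
$$n_{\rm n} \ge 2\,(|A| - N)_{+}, \qquad n_{\rm p} \ge (N - |A|)_{+},$$
together with the trivial bounds $n_{\rm n} \le |A|$ and $n_{\rm p} \le |A|$. The first two inequalities hold because $n_{\rm n}$ behaves as a counter that is incremented by $2$ on every insert and decremented by $2$ on every delete but floored at $0$, so its true value can only exceed the un-floored value $2(|A| - N)$; the bound on $n_{\rm p}$ is analogous. Granting the invariant, the lemma is immediate: when the length doubles, $|A| = 2N$ forces $2N \le n_{\rm n} \le |A| = 2N$, hence $n_{\rm n} = |A|$ and $n_{\rm p} = n_{\rm c} = 0$; when it halves, $|A| = N/2$ forces $N/2 \le n_{\rm p} \le |A| = N/2$, hence $n_{\rm p} = |A|$ and $n_{\rm c} = n_{\rm n} = 0$.

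It then remains to check that the prescribed moves can actually be performed, i.e. that the ${\rm n}$ and ${\rm p}$ groups have enough room and that the source of every move is non-empty whenever that move is not explicitly skipped. For capacity I would use the ceiling-based definitions: the ${\rm n}$ group holds up to $L_{\rm n} C_{\rm n} \ge (2N)^{\alpha}(2N)^{1-\alpha} = 2N$ elements and the ${\rm p}$ group up to $L_{\rm p} C_{\rm p} \ge (N/2)^{\alpha}(N/2)^{1-\alpha} = N/2$ elements, which match exactly the populations reached at the doubling and halving points; the per-block bound $|B_i| \le C_\bullet$ is then restored by the same balancing step already analyzed for the ${\rm moveLeft}/{\rm moveRight}$ operations.

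The main obstacle is the careful treatment of the ``unless already empty'' boundary cases. When a source group is empty a move is silently dropped, which perturbs the clean $(-1,0,+2)$ and $(+1,0,-2)$ transitions used above; the delicate point is to show that these perturbations never destroy the two inequalities of the invariant — concretely, that ${\rm c}$ never empties before the migration into ${\rm n}$ (resp. ${\rm p}$) is complete, so that the two $\mathrm{c}\to\mathrm{n}$ (resp. $\mathrm{p}\leftarrow\mathrm{c}$) moves always succeed while $N < |A| < 2N$ (resp. $N/2 < |A| < N$). I expect this to reduce to a short case analysis showing that whenever a move toward the saturated side is skipped, the floored value of the corresponding counter already meets its target, so the inequalities are preserved regardless of the order in which inserts and deletes are interleaved.
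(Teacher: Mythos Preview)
The paper does not prove this lemma; it is stated with a citation to \cite{navarro} and then used as a black box. There is therefore no in-paper argument to compare your attempt against.

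Your reconstruction is the standard counting argument for this kind of incremental-rebuild scheme and is sound. The net-effect computation $(-1,0,+2)$ per insert and $(+1,0,-2)$ per delete checks out in all six sub-cases, and the Lindley-type observation that a counter floored at zero dominates its un-floored shadow yields $n_{\rm n}\ge 2(|A|-N)_+$ and $n_{\rm p}\ge (N-|A|)_+$ immediately. The boundary analysis you flag as the main obstacle is short once set up properly: if a ${\rm c}\to{\rm n}$ move is ever skipped during an insert, then tracing the moves already executed within that same operation forces $n_{\rm p}=n_{\rm c}=0$ at that instant, hence $n_{\rm n}=|A|$ and the invariant holds trivially; the ${\rm p}\leftarrow{\rm c}$ case during a delete is symmetric. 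With that one-paragraph case check filled in, your proof is complete and matches what one finds in the cited reference.
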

    If the length of the string becomes double, set the blocks as follows
    \begin{align*}
        \left(\rm{p, c, n}\right) \leftarrow \left(\rm{pp, p, c}\right)
    \end{align*}
    and if the length of the string becomes half, set the blocks as follows
    \begin{align*}
        \left(\rm{p, c, n}\right) \leftarrow \left(\rm{c, n, nn}\right)
    \end{align*}
    where pp (previous to the previous) and nn (next to the next) are other types of blocks and $L_{\rm pp}, C_{\rm pp}, L_{\rm nn}, $ and $C_{\rm nn}$ are defined as follow.
    \begin{itemize}
        \item $L_{\rm pp} = \lceil\left(\frac{N}{4}\right)^\alpha\rceil, C_{\rm pp} = \lceil\left(\frac{N}{4}\right)^{1-\alpha}\rceil$
        \item $L_{\rm nn} = \lceil \left(4N\right)^\alpha\rceil, C_{\rm nn} = \lceil\left(4N\right)^{1-\alpha}\rceil$
    \end{itemize}
    We need to add ${\rm O}\!\left(N^{1 + 2\alpha}\right)$ extra elements for $S_{\left(l, r\right)} (0 \leq l \leq r < L_{\rm pp} + L_{\rm p} + L_{\rm c}$ $+ L_{\rm n} + L_{\rm nn})$ in order to prepare pp blocks and nn blocks and are prepared from when the block reset occured. There are ${\rm \Omega}\! \left(N\right)$ queries.
    These operations need\\
    %It can be done in ${\rm O}\!\left(L^2 T_{\ref{modifyS}} \right)$ time.
    ${\rm O}\!\left(T_{\ref{minB}} + T_{\ref{insertB}} + T_{\ref{deleteB}}
    + L\left(T_{\ref{modifyB}} + LT_{\ref{modifyS}}\right) 
    + N^{2\alpha}T_{\ref{modifyS}} + \max\left(1, N^{2\alpha - 1}\right)T_{\ref{initS}} \right)$ time per query.
\begin{thm}
    There exists a data structure for the dynamic range mode \mbox{enumeration} problem 
    %in the word RAM model with ${\rm \Omega}\! \left(\log N + \log \sigma\right)$ bits wordsize
    in
    ${\rm O}\!\left(T_{\ref{accessA}, 1} + T_{\ref{insertA}} + T_{\ref{deleteA}} + T_{\ref{minB}} + T_{\ref{binaryB}}
    + N^{\alpha}T_{\ref{modifyB}} + N^{2\alpha}T_{\ref{modifyS}}
    + \max\left(1, N^{2\alpha - 1}\right)T_{\ref{initS}} \right)$
    time per {\rm insert} and {\rm delete} query and
    ${\rm O}\!\left(T_{\ref{binaryB}}
    + T_{\ref{accessA}, {\rm \Theta}\!\left(N^{1-\alpha}\right)} + N^{1-\alpha}T_{\ref{accessS}} + T_{\ref{topS}}
    + T_{\ref{nextS}}\left|output\right|\right)$
    time per {\rm modes} query where $N$ is the length of the sequence
    and $\sigma = \left|\Sigma\right|$.
\end{thm}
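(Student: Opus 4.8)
The plan is to prove the statement by collecting, for each of the three query types, the per-query running time that was established term by term in the preceding exposition, and then substituting the block parameters $L = {\rm \Theta}\!\left(N^\alpha\right)$ and $C = {\rm \Theta}\!\left(N^{1-\alpha}\right)$ into the resulting sums. The bound for {\rm insert} and {\rm delete} additionally absorbs the cost of the block rebuilding scheme of~\cite{navarro}, which is spread evenly across queries, while the bound for {\rm modes} is a direct reading of Algorithm~\ref{modes}.

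For a {\rm delete} (and symmetrically the core of an {\rm insert}) I would add up the three contributions already identified: locating the affected block $j$ through $T_B$, costing ${\rm O}\!\left(T_{\ref{binaryB}}\right)$; reading and editing $T_A$, costing ${\rm O}\!\left(T_{\ref{accessA},1} + T_{\ref{insertA}} + T_{\ref{deleteA}}\right)$; and updating every set $S_{(l,r)}$ with $l \le j \le r$, of which there are ${\rm O}\!\left(L^2\right)$, costing ${\rm O}\!\left(L^2 T_{\ref{modifyS}}\right)$. For an {\rm insert} I would further add the length rebalancing: finding a block with room costs ${\rm O}\!\left(T_{\ref{minB}}\right)$, and the ${\rm O}\!\left(L\right)$ calls to {\rm moveLeft}/{\rm moveRight} cost ${\rm O}\!\left(L\left(T_{\ref{modifyB}} + L T_{\ref{modifyS}}\right)\right)$. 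Substituting $L = {\rm \Theta}\!\left(N^\alpha\right)$ turns $L^2 T_{\ref{modifyS}}$ into $N^{2\alpha} T_{\ref{modifyS}}$ and $L T_{\ref{modifyB}}$ into $N^\alpha T_{\ref{modifyB}}$, matching the displayed terms; the terms $T_{\ref{insertB}}$ and $T_{\ref{deleteB}}$ appearing in the rebuild expression occur only ${\rm O}\!\left(L\right)$ times per doubling or halving, so after spreading over the ${\rm \Omega}\!\left(N\right)$ queries of a phase they fall below the retained terms and are absorbed.

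For a {\rm modes} query I would follow Algorithm~\ref{modes}: computing the maximal block interval $(i,j)$ costs ${\rm O}\!\left(T_{\ref{binaryB}}\right)$; the two boundary stretches forming $A[l:r] \setminus \left(B_i + \cdots + B_j\right)$ contain at most $2C$ elements, so scanning them into a fresh ordered set $T$ costs ${\rm O}\!\left(T_{\ref{initS}} + C T_{\ref{modifyS}} + T_{\ref{accessA},C}\right)$, and combining each of these ${\rm O}\!\left(C\right)$ characters with its multiplicity read from $S_{(i,j)}$ costs ${\rm O}\!\left(C T_{\ref{accessS}}\right)$; finally, determining the mode frequency and emitting the answer through $S_{(i,j)}$ costs ${\rm O}\!\left(T_{\ref{topS}} + T_{\ref{nextS}} \left|output\right|\right)$. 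Substituting $C = {\rm \Theta}\!\left(N^{1-\alpha}\right)$ and using the natural relations $T_{\ref{modifyS}} = {\rm O}\!\left(T_{\ref{accessS}}\right)$ and $T_{\ref{initS}} = {\rm O}\!\left(T_{\ref{accessA},{\rm \Theta}\!\left(N^{1-\alpha}\right)}\right)$ folds the $T_{\ref{initS}}$ and $C T_{\ref{modifyS}}$ terms into the retained ones, yielding the displayed {\rm modes} bound.

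The main obstacle is the de-amortized accounting for the rebuilding scheme. I would argue that the gradual {\rm p}/{\rm c}/{\rm n} migration keeps the invariants $L = {\rm \Theta}\!\left(N^\alpha\right)$ and block length at most $C = {\rm \Theta}\!\left(N^{1-\alpha}\right)$ at every step, that a group rename occurs only when the length has doubled or halved (hence after ${\rm \Omega}\!\left(N\right)$ queries), and that the work of preparing the {\rm pp}/{\rm nn} groups can be spread evenly over the queries of a phase so the resulting bound holds in the worst case rather than merely amortized. The delicate count is that this preparation inserts ${\rm O}\!\left(N^{1+2\alpha}\right)$ elements into the sets $S_{(l,r)}$ -- each element lies in the ${\rm O}\!\left(L^2\right) = {\rm O}\!\left(N^{2\alpha}\right)$ sets whose block interval contains it -- and creates ${\rm O}\!\left(N^{2\alpha}\right)$ fresh sets; dividing by the ${\rm \Omega}\!\left(N\right)$ queries of the phase gives exactly the stated ${\rm O}\!\left(N^{2\alpha} T_{\ref{modifyS}}\right)$ and $\max\!\left(1, N^{2\alpha-1}\right) T_{\ref{initS}}$ per query, where the $\max$ reflects that a query performing any initialization must perform at least one whole one. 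Verifying that this even spreading can be interleaved with ordinary inserts and deletes without breaking the length invariants is the point that requires the most care.
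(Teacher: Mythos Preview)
Your proposal is correct and matches the paper's approach exactly: the paper states this theorem without a separate proof, treating it as a direct summary of the per-query cost analysis carried out in the preceding paragraphs, and your plan of collecting those costs and substituting $L = {\rm \Theta}\!\left(N^\alpha\right)$, $C = {\rm \Theta}\!\left(N^{1-\alpha}\right)$ is precisely that. You even make explicit the absorptions of $T_{\ref{insertB}}$, $T_{\ref{deleteB}}$, $T_{\ref{initS}}$, and $C\,T_{\ref{modifyS}}$ that the paper leaves unstated when passing from the individual cost lines to the theorem's displayed bounds.
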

%\begin{refproof}[proof of Theorem \ref{mainresult}.]
\begin{proof}[Proof of Theorem \ref{mainresult}]
    We use balanced binary search trees for $T_A$ and $T_B$.\\
    We use two balanced binary search trees for each $S_{\left(l, r\right)}$.
    One of them is the one whose key is a character in $B_l + \cdots + B_r$ and value is the number of the occurrences of the character in $B_l + \cdots + B_r$.
    The other is used as a ordered set $\left\{\left(t\left(c\right), c\right) \middle| c\in \Sigma, t\left(c\right) > 0\right\}$, where $t\left(c\right)$ is the number of the occurrences of $c$ in $B_l + \cdots + B_r$.
    Then, following equations hold.
    \begin{align*}
        &T_{\ref{accessA}, a} = {\rm O}\!\left(a + \log N\right)\\
        &T_{\ref{insertA}} = {\rm O}\!\left(\log N\right)\\
        &T_{\ref{deleteA}} = {\rm O}\!\left(\log N\right)\\
        &T_{\ref{modifyB}} = {\rm O}\!\left(\log L\right) = {\rm O}\!\left(\log N\right)\\
        &T_{\ref{minB}} = {\rm O}\!\left(\log L\right) = {\rm O}\!\left(\log N\right)\\
        &T_{\ref{binaryB}} = {\rm O}\!\left(\log L\right) = {\rm O}\!\left(\log N\right)\\
        &T_{\ref{insertB}} = {\rm O}\!\left(\log L\right) = {\rm O}\!\left(\log N\right)\\
        &T_{\ref{deleteB}} = {\rm O}\!\left(\log L\right) = {\rm O}\!\left(\log N\right)\\
        &T_{\ref{initS}} = {\rm O}\!\left(1\right)\\
        &T_{\ref{modifyS}} = {\rm O}\!\left(\log \sigma^\prime\right)\\
        &T_{\ref{accessS}} = {\rm O}\!\left(\log \sigma^\prime\right)\\
        &T_{\ref{topS}} = {\rm O}\!\left(1\right)\\
        &T_{\ref{nextS}} = {\rm O}\!\left(1\right)
    \end{align*}
    Setting $\alpha = \frac{1}{3}$, we obtain theorem \ref{mainresult}.
\end{proof}

\section{Concluding Remarks}
We introduced a new problem, the dynamic range mode enumeration problem.
We found an algorithm for it whose time complexity of a modes query is linear to the output size plus some term.
However, the term is larger than the time complexity of a mode query of the dynamic range mode problem.
It may be possible to found a new algorithm for the dynamic range mode enumeration problem whose time complexity for a query is equal to that of the dynamic range mode problem except the term depending on the output size.
%
% ---- Bibliography ----
%
% BibTeX users should specify bibliography style 'splncs04'.
% References will then be sorted and formatted in the correct style.
%
\bibliographystyle{splncs04}
\bibliography{CITE.bib}
\end{document}